\documentclass[twocolumn,10pt]{article}

\usepackage[top=2cm, bottom=2cm, left=2cm, right=2cm,a4paper]{geometry}%\documentclass[prl,letterpaper,twocolumn]{revtex4}

\usepackage{amsmath,amsthm}
\usepackage{bm}  % Define \bm{} to use bold math fonts
\usepackage{bbm}
\usepackage{verbatim}
\usepackage{xcolor}

\usepackage[backend=bibtex,sorting=none,style=phys,biblabel=brackets,backref=false,bibencoding=utf8,maxnames=20,eprint=true]{biblatex}
\makeatletter
\pretocmd{\blx@head@bibintoc}{\phantomsection}{}{\ddt}
\makeatother
\DefineBibliographyStrings{english}{%
  backrefpage = {page},% originally "cited on page"
  backrefpages = {pages},% originally "cited on pages"
}
\addbibresource{caqec.bib}
%\begin{comment}
%-------------- start insert modified commands ------------------
\makeatletter
\def\blx@bblfile@bibtex{%
  \blx@secinit
  \begingroup
  \blx@bblstart
%%%%%%%%%%%%%%%%%%%%%%%%%%%%%%%%%%%%%
%
% $ biblatex auxiliary file $
% $ biblatex version 2.6 $
% Do not modify the above lines!
%
% This is an auxiliary file used by the 'biblatex' package.
% This file may safely be deleted. It will be recreated as
% required.
%
\begingroup
\makeatletter
\@ifundefined{ver@biblatex.sty}
  {\@latex@error
     {Missing 'biblatex' package}
     {The bibliography requires the 'biblatex' package.}
      \aftergroup }
  {}
\endgroup

\entry{coles_entropic_2015}{article}{}
  \name{author}{4}{}{%
    {{}%
     {Coles}{C.}%
     {Patrick~J.}{P.~J.}%
     {}{}%
     {}{}}%
    {{}%
     {Berta}{B.}%
     {Mario}{M.}%
     {}{}%
     {}{}}%
    {{}%
     {Tomamichel}{T.}%
     {Marco}{M.}%
     {}{}%
     {}{}}%
    {{}%
     {Wehner}{W.}%
     {Stephanie}{S.}%
     {}{}%
     {}{}}%
  }
  \keyw{Quantum Physics}
  \strng{namehash}{CPJBMTMWS1}
  \strng{fullhash}{CPJBMTMWS1}
  \field{abstract}{%
  Heisenberg's uncertainty principle forms a fundamental element of quantum
  mechanics. Uncertainty relations in terms of entropies were initially
  proposed to deal with conceptual shortcomings in the original formulation of
  the uncertainty principle, and hence play an important role in quantum
  foundations. More recently, entropic uncertainty relations have emerged as
  the central ingredient in the security analysis of almost all quantum
  cryptographic protocols, ranging from quantum key distribution to two-party
  quantum cryptography. This review surveys entropic uncertainty relations that
  capture Heisenberg's idea that the results of incompatible measurements are
  impossible to predict, covering both finite- and infinite-dimensional
  measurements. These ideas are then extended to incorporate quantum
  correlations between the observed object and its environment, allowing for a
  variety of recent, more general formulations of the uncertainty principle.
  Finally various applications are discussed, ranging from entanglement
  witnessing to wave-particle duality to quantum cryptography.%
  }
  \verb{eprint}
  \verb 1511.04857
  \endverb
  \field{title}{Entropic {{Uncertainty Relations}} and their {{Applications}}}
  \verb{url}
  \verb http://arxiv.org/abs/1511.04857
  \endverb
  \verb{file}
  \verb arXiv\:1511.04857 PDF:/Users/joerenes/store/zotero/storage/UB8W8ZJS/Col
  \verb es et al. - 2015 - Entropic Uncertainty Relations and their Applicati.p
  \verb df:application/pdf
  \endverb
  \field{eprinttype}{arxiv}
  \field{eprintclass}{quant-ph}
  \field{day}{16}
  \field{month}{11}
  \field{year}{2015}
  \field{urlday}{15}
  \field{urlmonth}{03}
  \field{urlyear}{2016}
  \warn{\item Can't use 'eprinttype' + 'archiveprefix'}
\endentry

\entry{berta_uncertainty_2010}{article}{}
  \name{author}{5}{}{%
    {{}%
     {Berta}{B.}%
     {Mario}{M.}%
     {}{}%
     {}{}}%
    {{}%
     {Christandl}{C.}%
     {Matthias}{M.}%
     {}{}%
     {}{}}%
    {{}%
     {Colbeck}{C.}%
     {Roger}{R.}%
     {}{}%
     {}{}}%
    {{}%
     {Renes}{R.}%
     {Joseph~M.}{J.~M.}%
     {}{}%
     {}{}}%
    {{}%
     {Renner}{R.}%
     {Renato}{R.}%
     {}{}%
     {}{}}%
  }
  \strng{namehash}{BMCMCRRJMRR1}
  \strng{fullhash}{BMCMCRRJMRR1}
  \field{abstract}{%
  The uncertainty principle, originally formulated by Heisenberg, clearly
  illustrates the difference between classical and quantum mechanics. The
  principle bounds the uncertainties about the outcomes of two incompatible
  measurements, such as position and momentum, on a particle. It implies that
  one cannot predict the outcomes for both possible choices of measurement to
  arbitrary precision, even if information about the preparation of the
  particle is available in a classical memory. However, if the particle is
  prepared entangled with a quantum memory, a device that might be available in
  the not-too-distant future, it is possible to predict the outcomes for both
  measurement choices precisely. Here, we extend the uncertainty principle to
  incorporate this case, providing a lower bound on the uncertainties, which
  depends on the amount of entanglement between the particle and the quantum
  memory. We detail the application of our result to witnessing entanglement
  and to quantum key distribution.%
  }
  \verb{doi}
  \verb 10.1038/nphys1734
  \endverb
  \verb{eprint}
  \verb 0909.0950
  \endverb
  \field{issn}{1745-2481}
  \field{pages}{659}
  \field{shortjournal}{Nat Phys}
  \field{title}{The uncertainty principle in the presence of quantum memory}
  \verb{url}
  \verb http://dx.doi.org/10.1038/nphys1734
  \endverb
  \field{volume}{6}
  \verb{file}
  \verb Berta et al. - 2010 - The uncertainty principle in the presence of quan
  \verb t.pdf:/Users/joerenes/store/zotero/storage/4CVSR2BF/Berta et al. - 2010
  \verb  - The uncertainty principle in the presence of quant.pdf:application/p
  \verb df;nphys1734-s1.pdf:/Users/joerenes/store/zotero/storage/FJA4HTXD/nphys
  \verb 1734-s1.pdf:application/pdf
  \endverb
  \field{journaltitle}{Nature Physics}
  \field{eprinttype}{arxiv}
  \field{eprintclass}{quant-ph}
  \field{day}{25}
  \field{month}{07}
  \field{year}{2010}
  \field{urlday}{09}
  \field{urlmonth}{09}
  \field{urlyear}{2010}
  \warn{\item Can't use 'eprinttype' + 'archiveprefix'}
\endentry

\entry{christandl_uncertainty_2005}{article}{}
  \name{author}{2}{}{%
    {{}%
     {Christandl}{C.}%
     {M.}{M.}%
     {}{}%
     {}{}}%
    {{}%
     {Winter}{W.}%
     {A.}{A.}%
     {}{}%
     {}{}}%
  }
  \strng{namehash}{CMWA1}
  \strng{fullhash}{CMWA1}
  \field{abstract}{%
  Squashed entanglement and entanglement of purification are quantum-mechanical
  correlation measures and are defined as certain minimizations of entropic
  quantities. In this paper, we present the first nontrivial calculations of
  both quantities. Our results lead to the conclusion that both measures can
  drop by an arbitrary amount when only a single qubit of a local system is
  lost. This property is known as “locking” and has previously been
  observed for other correlation measures such as accessible information,
  entanglement cost, and logarithmic negativity. In the case of squashed
  entanglement, the results are obtained using an inequality that can be
  understood as a quantum channel analogue of well-known entropic uncertainty
  relations. This inequality may prove a useful tool in quantum information
  theory. The regularized entanglement of purification is known to equal the
  entanglement needed to prepare many copies of a quantum state by local
  operations and a sublinear amount of communication. Here, monogamy of quantum
  entanglement (i.e., the impossibility of a system being maximally entangled
  with two others at the same time) leads to an exact calculation for all
  quantum states that are supported either on the symmetric or on the
  antisymmetric subspace of a\textdollar{}dtimes d\textdollar{}-dimensional
  system.%
  }
  \verb{doi}
  \verb 10.1109/TIT.2005.853338
  \endverb
  \verb{eprint}
  \verb quant-ph/0501090
  \endverb
  \field{issn}{0018-9448}
  \field{number}{9}
  \field{pages}{3159}
  \field{shortjournal}{Information Theory, IEEE Transactions on}
  \field{title}{Uncertainty, {{Monogamy}}, and {{Locking}} of {{Quantum
  Correlations}}}
  \field{volume}{51}
  \verb{file}
  \verb Christandl and Winter - 2005 - Uncertainty, Monogamy, and Locking of Qu
  \verb antum Corr.pdf:/Users/joerenes/store/zotero/storage/N28MDJZB/Christandl
  \verb  and Winter - 2005 - Uncertainty, Monogamy, and Locking of Quantum Corr
  \verb .pdf:application/pdf
  \endverb
  \field{journaltitle}{IEEE Transactions on Information Theory}
  \field{eprinttype}{arxiv}
  \field{year}{2005}
  \warn{\item Can't use 'eprinttype' + 'archiveprefix'}
\endentry

\entry{renes_conjectured_2009}{article}{}
  \name{author}{2}{}{%
    {{}%
     {Renes}{R.}%
     {Joseph~M.}{J.~M.}%
     {}{}%
     {}{}}%
    {{}%
     {Boileau}{B.}%
     {Jean-Christian}{J.-C.}%
     {}{}%
     {}{}}%
  }
  \strng{namehash}{RJMBJC1}
  \strng{fullhash}{RJMBJC1}
  \field{abstract}{%
  We conjecture a new entropic uncertainty principle governing the entropy of
  complementary observations made on a system given side information in the
  form of quantum states, generalizing the entropic uncertainty relation of
  Maassen and Uffink [Phys. Rev. Lett. 60, 1103 (1988)]. We prove a special
  case for certain conjugate observables by adapting a similar result found by
  Christandl and Winter pertaining to quantum channels [IEEE Trans. Inf. Theory
  51, 3159 (2005)], and discuss possible applications of this result to the
  decoupling of quantum systems and for security analysis in quantum
  cryptography.%
  }
  \verb{doi}
  \verb 10.1103/PhysRevLett.103.020402
  \endverb
  \verb{eprint}
  \verb 0806.3984
  \endverb
  \field{number}{2}
  \field{pages}{020402}
  \field{shortjournal}{Phys. Rev. Lett.}
  \field{title}{Conjectured {{Strong Complementary Information Tradeoff}}}
  \verb{url}
  \verb http://link.aps.org/abstract/PRL/v103/e020402
  \endverb
  \field{volume}{103}
  \verb{file}
  \verb Renes and Boileau - 2009 - Conjectured Strong Complementary Information
  \verb  Trade.pdf:/Users/joerenes/store/zotero/storage/46EMWGC8/Renes and Boil
  \verb eau - 2009 - Conjectured Strong Complementary Information Trade.pdf:app
  \verb lication/pdf
  \endverb
  \field{journaltitle}{Physical Review Letters}
  \field{eprinttype}{arxiv}
  \field{eprintclass}{quant-ph}
  \field{day}{10}
  \field{month}{07}
  \field{year}{2009}
  \field{urlday}{07}
  \field{urlmonth}{09}
  \field{urlyear}{2009}
  \warn{\item Can't use 'eprinttype' + 'archiveprefix'}
\endentry

\entry{tomamichel_uncertainty_2011}{article}{}
  \name{author}{2}{}{%
    {{}%
     {Tomamichel}{T.}%
     {Marco}{M.}%
     {}{}%
     {}{}}%
    {{}%
     {Renner}{R.}%
     {Renato}{R.}%
     {}{}%
     {}{}}%
  }
  \strng{namehash}{TMRR1}
  \strng{fullhash}{TMRR1}
  \field{abstract}{%
  Uncertainty relations give upper bounds on the accuracy by which the outcomes
  of two incompatible measurements can be predicted. While established
  uncertainty relations apply to cases where the predictions are based on
  purely classical data (e.g., a description of the system’s state before
  measurement), an extended relation which remains valid in the presence of
  quantum information has been proposed recently [Berta et al., Nature Phys. 6,
  659 (2010)]. Here, we generalize this uncertainty relation to one formulated
  in terms of smooth entropies. Since these entropies measure operational
  quantities such as extractable secret key length, our uncertainty relation is
  of immediate practical use. To illustrate this, we show that it directly
  implies security of quantum key distribution protocols. Our security claim
  remains valid even if the implemented measurement devices deviate arbitrarily
  from the theoretical model.%
  }
  \verb{doi}
  \verb 10.1103/PhysRevLett.106.110506
  \endverb
  \verb{eprint}
  \verb 1009.2015
  \endverb
  \field{number}{11}
  \field{pages}{110506}
  \field{shortjournal}{Phys. Rev. Lett.}
  \field{title}{Uncertainty {{Relation}} for {{Smooth Entropies}}}
  \verb{url}
  \verb http://link.aps.org/doi/10.1103/PhysRevLett.106.110506
  \endverb
  \field{volume}{106}
  \verb{file}
  \verb Tomamichel and Renner - 2011 - Uncertainty Relation for Smooth Entropie
  \verb s.pdf:/Users/joerenes/store/zotero/storage/TAKTQ5R3/Tomamichel and Renn
  \verb er - 2011 - Uncertainty Relation for Smooth Entropies.pdf:application/p
  \verb df
  \endverb
  \field{journaltitle}{Physical Review Letters}
  \field{eprinttype}{arxiv}
  \field{eprintclass}{quant-ph}
  \field{day}{16}
  \field{month}{03}
  \field{year}{2011}
  \field{urlday}{28}
  \field{urlmonth}{05}
  \field{urlyear}{2011}
  \warn{\item Can't use 'eprinttype' + 'archiveprefix'}
\endentry

\entry{coles_unifying_2015}{article}{}
  \name{author}{1}{}{%
    {{}%
     {Coles}{C.}%
     {Patrick~J.}{P.~J.}%
     {}{}%
     {}{}}%
  }
  \keyw{Quantum Physics}
  \strng{namehash}{CPJ1}
  \strng{fullhash}{CPJ1}
  \field{abstract}{%
  An interferometer - no matter how clever the design - cannot allow the
  experimenter to see both the wave and particle behavior of a quantum system.
  This fundamental idea has been captured quantitatively by inequalities,
  so-called wave-particle duality relations (WPDRs), that upper bound the sum
  of the path distinguishability (particle behavior) and fringe visibility
  (wave behavior). Another fundamental quantum concept is Heisenberg's
  uncertainty principle, stating that some pairs of observables cannot be known
  simultaneously. The distinction between these two concepts has been debated
  in the literature. Here we provide closure to the debate by showing that
  WPDRs correspond to a modern formulation of the uncertainty principle,
  namely, the uncertainty relation for the min- and max-entropies, which is
  used in quantum cryptography. At the technical level, our unification relies
  on a novel operational meaning for the max-entropy in terms of the guessing
  probability. Furthermore, our unification provides a framework for solving an
  outstanding problem of how to rigorously prove WPDRs for interferometers with
  more than two paths, and we employ this framework to derive some novel
  WPDRs.%
  }
  \verb{eprint}
  \verb 1512.09081
  \endverb
  \field{shorttitle}{Unifying wave-particle duality with entropic uncertainty}
  \field{title}{Unifying wave-particle duality with entropic uncertainty:
  {{Multi}}-path interferometers}
  \verb{url}
  \verb http://arxiv.org/abs/1512.09081
  \endverb
  \verb{file}
  \verb Coles_2015_Unifying wave-particle duality with entropic uncertainty.pdf
  \verb :/Users/joerenes/store/zotero/storage/6WF3AQNV/Coles_2015_Unifying wave
  \verb -particle duality with entropic uncertainty.pdf:application/pdf
  \endverb
  \field{eprinttype}{arxiv}
  \field{eprintclass}{quant-ph}
  \field{day}{30}
  \field{month}{12}
  \field{year}{2015}
  \field{urlday}{09}
  \field{urlmonth}{05}
  \field{urlyear}{2016}
  \warn{\item Can't use 'eprinttype' + 'archiveprefix'}
\endentry

\entry{renes_efficient_2012}{article}{}
  \name{author}{3}{}{%
    {{}%
     {Renes}{R.}%
     {Joseph~M.}{J.~M.}%
     {}{}%
     {}{}}%
    {{}%
     {Dupuis}{D.}%
     {Frédéric}{F.}%
     {}{}%
     {}{}}%
    {{}%
     {Renner}{R.}%
     {Renato}{R.}%
     {}{}%
     {}{}}%
  }
  \strng{namehash}{RJMDFRR1}
  \strng{fullhash}{RJMDFRR1}
  \field{abstract}{%
  Polar coding, introduced 2008 by Arıkan, is the first (very) efficiently
  encodable and decodable coding scheme whose information transmission rate
  provably achieves the Shannon bound for classical discrete memoryless
  channels in the asymptotic limit of large block sizes. Here, we study the use
  of polar codes for the transmission of quantum information. Focusing on the
  case of qubit Pauli channels and qubit erasure channels, we use classical
  polar codes to construct a coding scheme that asymptotically achieves a net
  transmission rate equal to the coherent information using efficient encoding
  and decoding operations and code construction. Our codes generally require
  preshared entanglement between sender and receiver, but for channels with a
  sufficiently low noise level we demonstrate that the rate of preshared
  entanglement required is zero.%
  }
  \verb{doi}
  \verb 10.1103/PhysRevLett.109.050504
  \endverb
  \verb{eprint}
  \verb 1109.3195
  \endverb
  \field{number}{5}
  \field{pages}{050504}
  \field{shortjournal}{Phys. Rev. Lett.}
  \field{title}{Efficient {{Polar Coding}} of {{Quantum Information}}}
  \verb{url}
  \verb http://link.aps.org/doi/10.1103/PhysRevLett.109.050504
  \endverb
  \field{volume}{109}
  \verb{file}
  \verb Renes et al. - 2012 - Efficient Polar Coding of Quantum Information.pdf
  \verb :/Users/joerenes/store/zotero/storage/BK4F2PXE/Renes et al. - 2012 - Ef
  \verb ficient Polar Coding of Quantum Information.pdf:application/pdf;polar-p
  \verb rl-supplement.pdf:/Users/joerenes/store/zotero/storage/QMHK67QB/polar-p
  \verb rl-supplement.pdf:application/pdf
  \endverb
  \field{journaltitle}{Physical Review Letters}
  \field{eprinttype}{arxiv}
  \field{eprintclass}{quant-ph}
  \field{day}{01}
  \field{month}{08}
  \field{year}{2012}
  \field{urlday}{16}
  \field{urlmonth}{08}
  \field{urlyear}{2012}
  \warn{\item Can't use 'eprinttype' + 'archiveprefix'}
\endentry

\entry{nielsen_quantum_2000}{book}{}
  \name{author}{2}{}{%
    {{}%
     {Nielsen}{N.}%
     {Michael~A.}{M.~A.}%
     {}{}%
     {}{}}%
    {{}%
     {Chuang}{C.}%
     {Isaac~L.}{I.~L.}%
     {}{}%
     {}{}}%
  }
  \list{publisher}{1}{%
    {{Cambridge University Press}}%
  }
  \strng{namehash}{NMACIL1}
  \strng{fullhash}{NMACIL1}
  \field{isbn}{0-521-63503-9}
  \field{title}{Quantum {{Computation}} and {{Quantum Information}}}
  \verb{file}
  \verb Nielsen and Chuang - 2000 - Quantum Computation and Quantum Information
  \verb .pdf:/Users/joerenes/store/zotero/storage/H3H688JB/Nielsen and Chuang -
  \verb  2000 - Quantum Computation and Quantum Information.pdf:application/pdf
  \endverb
  \field{year}{2000}
\endentry

\entry{leung_approximate_1997}{article}{}
  \name{author}{4}{}{%
    {{}%
     {Leung}{L.}%
     {Debbie~W.}{D.~W.}%
     {}{}%
     {}{}}%
    {{}%
     {Nielsen}{N.}%
     {M.~A.}{M.~A.}%
     {}{}%
     {}{}}%
    {{}%
     {Chuang}{C.}%
     {Isaac~L.}{I.~L.}%
     {}{}%
     {}{}}%
    {{}%
     {Yamamoto}{Y.}%
     {Yoshihisa}{Y.}%
     {}{}%
     {}{}}%
  }
  \strng{namehash}{LDWNMACILYY1}
  \strng{fullhash}{LDWNMACILYY1}
  \field{abstract}{%
  We present relaxed criteria for quantum error correction that are useful when
  the specific dominant quantum noise process is known. As an example, we
  provide a four-bit code that corrects for a single amplitude damping error.
  This code violates the usual Hamming bound calculated for a Pauli description
  of the error process and has no simple explanation in terms of the usual
  Pauli basis GF(4) codes.%
  }
  \verb{doi}
  \verb 10.1103/PhysRevA.56.2567
  \endverb
  \verb{eprint}
  \verb quant-ph/9704002
  \endverb
  \field{number}{4}
  \field{pages}{2567\bibrangedash 2573}
  \field{shortjournal}{Phys. Rev. A}
  \field{title}{Approximate quantum error correction can lead to better codes}
  \verb{url}
  \verb http://link.aps.org/doi/10.1103/PhysRevA.56.2567
  \endverb
  \field{volume}{56}
  \verb{file}
  \verb Leung et al. - 1997 - Approximate quantum error correction can lead to
  \verb b.pdf:/Users/joerenes/store/zotero/storage/IIG8F65W/Leung et al. - 1997
  \verb  - Approximate quantum error correction can lead to b.pdf:application/p
  \verb df
  \endverb
  \field{journaltitle}{Physical Review A}
  \field{eprinttype}{arxiv}
  \field{day}{01}
  \field{month}{10}
  \field{year}{1997}
  \field{urlday}{25}
  \field{urlmonth}{09}
  \field{urlyear}{2014}
  \warn{\item Can't use 'eprinttype' + 'archiveprefix'}
\endentry

\entry{alvaro}{article}{}
  \name{author}{2}{}{%
    {{}%
     {Piedrafita}{P.}%
     {\'{A}lvaro}{A.}%
     {}{}%
     {}{}}%
    {{}%
     {Renes}{R.}%
     {Joseph~M.}{J.~M.}%
     {}{}%
     {}{}}%
  }
  \strng{namehash}{PARJM1}
  \strng{fullhash}{PARJM1}
  \field{title}{Channel adapted decoding strategies based on complementarity}
  \field{journaltitle}{in preparation}
\endentry

\entry{devetak_private_2005}{article}{}
  \name{author}{1}{}{%
    {{}%
     {Devetak}{D.}%
     {Igor}{I.}%
     {}{}%
     {}{}}%
  }
  \keyw{channel capacity,channel coding,channel coding theorem,coherent
  information,cryptography,key generation,private classical information
  transmission,protocols,public classical communication,public key
  cryptography,pure bipartite entanglement,quantum channel capacity,quantum
  cryptography,quantum entanglement,wiretap channel}
  \strng{namehash}{DI1}
  \strng{fullhash}{DI1}
  \field{abstract}{%
  A formula for the capacity of a quantum channel for transmitting private
  classical information is derived. This is shown to be equal to the capacity
  of the channel for generating a secret key, and neither capacity is enhanced
  by forward public classical communication. Motivated by the work of
  Schumacher and Westmoreland on quantum privacy and quantum coherence,
  parallels between private classical information and quantum information are
  exploited to obtain an expression for the capacity of a quantum channel for
  generating pure bipartite entanglement. The latter implies a new proof of the
  quantum channel coding theorem and a simple proof of the converse. The
  coherent information plays a role in all of the above mentioned capacities.%
  }
  \verb{doi}
  \verb 10.1109/TIT.2004.839515
  \endverb
  \verb{eprint}
  \verb quant-ph/0304127
  \endverb
  \field{issn}{0018-9448}
  \field{number}{1}
  \field{pages}{44}
  \field{title}{The private classical capacity and quantum capacity of a
  quantum channel}
  \field{volume}{51}
  \verb{file}
  \verb Devetak - 2005 - The private classical capacity and quantum capacit.pdf
  \verb :/Users/joerenes/store/zotero/storage/MM8TN2HU/Devetak - 2005 - The pri
  \verb vate classical capacity and quantum capacit.pdf:application/pdf
  \endverb
  \field{journaltitle}{IEEE Transactions on Information Theory}
  \field{eprinttype}{arxiv}
  \field{year}{2005}
  \warn{\item Can't use 'eprinttype' + 'archiveprefix'}
\endentry

\entry{schumacher_approximate_2002}{article}{}
  \name{author}{2}{}{%
    {{}%
     {Schumacher}{S.}%
     {Benjamin}{B.}%
     {}{}%
     {}{}}%
    {{}%
     {Westmoreland}{W.}%
     {Michael~D.}{M.~D.}%
     {}{}%
     {}{}}%
  }
  \strng{namehash}{SBWMD1}
  \strng{fullhash}{SBWMD1}
  \field{abstract}{%
  The errors that arise in a quantum channel can be corrected perfectly if and
  only if the channel does not decrease the coherent information of the input
  state. We show that, if the loss of coherent information is small, then
  approximate quantum error correction is possible.%
  }
  \verb{doi}
  \verb 10.1023/A:1019653202562
  \endverb
  \verb{eprint}
  \verb quant-ph/0112106
  \endverb
  \field{number}{1}
  \field{pages}{5\bibrangedash 12}
  \field{title}{Approximate {{Quantum Error Correction}}}
  \verb{url}
  \verb http://dx.doi.org/10.1023/A:1019653202562
  \endverb
  \field{volume}{1}
  \verb{file}
  \verb Schumacher and Westmoreland - 2002 - Approximate Quantum Error Correcti
  \verb on.pdf:/Users/joerenes/store/zotero/storage/U5W6PPJ8/Schumacher and Wes
  \verb tmoreland - 2002 - Approximate Quantum Error Correction.pdf:application
  \verb /pdf
  \endverb
  \field{journaltitle}{Quantum Information Processing}
  \field{eprinttype}{arxiv}
  \field{day}{01}
  \field{month}{04}
  \field{year}{2002}
  \field{urlday}{24}
  \field{urlmonth}{09}
  \field{urlyear}{2007}
  \warn{\item Can't use 'eprinttype' + 'archiveprefix'}
\endentry

\entry{tomamichel_monogamyofentanglement_2013}{article}{}
  \name{author}{4}{}{%
    {{}%
     {Tomamichel}{T.}%
     {Marco}{M.}%
     {}{}%
     {}{}}%
    {{}%
     {Fehr}{F.}%
     {Serge}{S.}%
     {}{}%
     {}{}}%
    {{}%
     {Kaniewski}{K.}%
     {Jędrzej}{J.}%
     {}{}%
     {}{}}%
    {{}%
     {Wehner}{W.}%
     {Stephanie}{S.}%
     {}{}%
     {}{}}%
  }
  \strng{namehash}{TMFSKJWS1}
  \strng{fullhash}{TMFSKJWS1}
  \field{abstract}{%
  We consider a game in which two separate laboratories collaborate to prepare
  a quantum system and are then asked to guess the outcome of a measurement
  performed by a third party in a random basis on that system. Intuitively, by
  the uncertainty principle and the monogamy of entanglement, the probability
  that both players simultaneously succeed in guessing the outcome correctly is
  bounded. We are interested in the question of how the success probability
  scales when many such games are performed in parallel. We show that any
  strategy that maximizes the probability to win every game individually is
  also optimal for the parallel repetition of the game. Our result implies that
  the optimal guessing probability can be achieved without the use of
  entanglement. We explore several applications of this result. Firstly, we
  show that it implies security for standard BB84 quantum key distribution when
  the receiving party uses fully untrusted measurement devices , i.e. we show
  that BB84 is one-sided device independent. Secondly, we show how our result
  can be used to prove security of a one-round position-verification scheme.
  Finally, we generalize a well-known uncertainty relation for the guessing
  probability to quantum side information.%
  }
  \verb{doi}
  \verb 10.1088/1367-2630/15/10/103002
  \endverb
  \verb{eprint}
  \verb 1210.4359
  \endverb
  \field{issn}{1367-2630}
  \field{number}{10}
  \field{pages}{103002}
  \field{shortjournal}{New J. Phys.}
  \field{title}{A monogamy-of-entanglement game with applications to
  device-independent quantum cryptography}
  \verb{url}
  \verb http://stacks.iop.org/1367-2630/15/i=10/a=103002
  \endverb
  \field{volume}{15}
  \field{langid}{english}
  \verb{file}
  \verb Tomamichel et al_2013_A monogamy-of-entanglement game with applications
  \verb  to device-independent quantum.pdf:/Users/joerenes/store/zotero/storage
  \verb /H4QTQCCH/Tomamichel et al_2013_A monogamy-of-entanglement game with ap
  \verb plications to device-independent quantum.pdf:application/pdf
  \endverb
  \field{journaltitle}{New Journal of Physics}
  \field{eprinttype}{arxiv}
  \field{eprintclass}{quant-ph}
  \field{year}{2013}
  \field{urlday}{09}
  \field{urlmonth}{11}
  \field{urlyear}{2015}
  \warn{\item Can't use 'eprinttype' + 'archiveprefix'}
\endentry

\entry{renes_heisenberg_2016}{article}{}
  \name{author}{3}{}{%
    {{}%
     {Renes}{R.}%
     {Joseph~M.}{J.~M.}%
     {}{}%
     {}{}}%
    {{}%
     {Scholz}{S.}%
     {Volkher~B.}{V.~B.}%
     {}{}%
     {}{}}%
    {{}%
     {Huber}{H.}%
     {Stefan}{S.}%
     {}{}%
     {}{}}%
  }
  \strng{namehash}{RJMSVBHS1}
  \strng{fullhash}{RJMSVBHS1}
  \field{title}{Heisenberg un- certainty relations: An operational approach}
  \field{journaltitle}{in preparation}
\endentry

\entry{kretschmann_information-disturbance_2008}{article}{}
  \name{author}{3}{}{%
    {{}%
     {Kretschmann}{K.}%
     {D.}{D.}%
     {}{}%
     {}{}}%
    {{}%
     {Schlingemann}{S.}%
     {D.}{D.}%
     {}{}%
     {}{}}%
    {{}%
     {Werner}{W.}%
     {R.F.}{R.}%
     {}{}%
     {}{}}%
  }
  \strng{namehash}{KDSDWR1}
  \strng{fullhash}{KDSDWR1}
  \field{abstract}{%
  Stinespring's dilation theorem is the basic structure theorem for quantum
  channels: it states that any quantum channel arises from a unitary evolution
  on a larger system. Here we prove a continuity theorem for Stinespring's
  dilation: if two quantum channels are close in cb-norm, then it is always
  possible to find unitary implementations which are close in operator norm,
  with dimension-independent bounds. This result generalizes Uhlmann's theorem
  from states to channels and allows to derive a formulation of the
  information-disturbance tradeoff in terms of quantum channels, as well as a
  continuity estimate for the no-broadcasting theorem. We briefly discuss
  further implications for quantum cryptography, thermalization processes, and
  the black hole information loss puzzle.%
  }
  \verb{doi}
  \verb 10.1109/TIT.2008.917696
  \endverb
  \verb{eprint}
  \verb quant-ph/0605009
  \endverb
  \field{issn}{0018-9448}
  \field{number}{4}
  \field{pages}{1708}
  \field{title}{The {{Information-Disturbance Tradeoff}} and the {{Continuity}}
  of {{Stinespring}}'s {{Representation}}}
  \field{volume}{54}
  \verb{file}
  \verb Kretschmann et al. - 2008 - The Information-Disturbance Tradeoff and th
  \verb e Conti.pdf:/Users/joerenes/store/zotero/storage/RU588MMT/Kretschmann e
  \verb t al. - 2008 - The Information-Disturbance Tradeoff and the Conti.pdf:a
  \verb pplication/pdf
  \endverb
  \field{journaltitle}{IEEE Transactions on Information Theory}
  \field{eprinttype}{arxiv}
  \field{year}{2008}
  \warn{\item Can't use 'eprinttype' + 'archiveprefix'}
\endentry

\lossort
\endlossort

%
%%%%%%%%%%%%%%%%%%%%%%%%%%%%%%%%%%%%%
  \blx@bblend
  \endgroup
  \csnumgdef{blx@labelnumber@\the\c@refsection}{0}}
\makeatother
%-------------- end insert modified commands ------------------

%\end{comment}

\setlength{\biblabelsep}{3pt}
\setlength{\bibhang}{0pt}
\setlength{\bibitemsep}{2pt}

\usepackage{titlesec}

\titleformat*{\section}{\bfseries}
\titleformat*{\subsection}{\normalsize\bfseries}
\titleformat*{\subsubsection}{\bfseries}
\titleformat*{\paragraph}{\large\bfseries}
\titleformat*{\subparagraph}{\large\bfseries}

\titlespacing\section{0pt}{12pt plus 4pt minus 2pt}{2pt plus 2pt minus 2pt}

\usepackage[bookmarks,colorlinks,breaklinks]{hyperref}  % PDF hyperlinks, with coloured links
\definecolor{dullmagenta}{rgb}{0.4,0,0.4}   % #660066
\definecolor{darkblue}{rgb}{0,0,0.4}
\hypersetup{linkcolor=red,citecolor=blue,filecolor=dullmagenta,urlcolor=darkblue} % coloured links
%\usepackage{memhfixc}  % remove conflict between the memoir class & hyperref
%\usepackage{pdfsync}  % enable tex source and pdf output syncronicity

%\usepackage{sectsty}

%\sectionfont{\fontsize{11}{8}\selectfont}
%\subsectionfont{\fontsize{10}{0}\selectfont}

%\usepackage[bitstream-charter]{mathdesign}
%\usepackage[T1]{fontenc}

\newcommand\invisiblesection[1]{%
  \addcontentsline{toc}{section}{#1}%
  \sectionmark{#1}
  }

\RequirePackage[charter, greekuppercase=italicized]{mathdesign}
\RequirePackage{beramono}
\RequirePackage{berasans}

%\addtolength{\parskip}{-.1\parskip}

%\usepackage{mathpazo}
%\usepackage{eulervm}

\usepackage{amsbsy,verbatim,graphicx}

\newcommand{\ket}[1]{\left|#1\right\rangle}

\newcommand{\bra}[1]{\left\langle #1\right|}

\newcommand{\ketbra}[1]{\ket{#1}\bra{#1}}

\newcommand{\pg}{P}
\newcommand{\f}{F}
\newcommand{\q}{Q}
\newcommand{\tr}{{\text{Tr}}}
\newcommand{\id}{\mathbbm 1}
\newcommand{\mix}{\pi}
\newcommand{\Hk}{H_K}
\newcommand{\hHk}{\hat H_K}
\newcommand{\Dk}{D_K}
\newcommand{\acos}{\mathrm{acos}\,}

\newtheorem{theorem}{Theorem}
\newtheorem{corollary}{Corollary}

\newtheorem{lemma}{Lemma}

% \usepackage{pgfplots}
% \pgfplotsset{compat=newest}
% \usepgfplotslibrary{fillbetween}
% \usetikzlibrary{backgrounds,fit,decorations.pathreplacing,positioning}

% \usetikzlibrary{external}
% \tikzexternalize

\usepackage{xfrac}
%\usepackage{tikz}
%\usetikzlibrary{backgrounds,fit,decorations.pathreplacing,positioning}
%\pgfrealjobname{caqec}

\usepackage{setspace}
\usepackage{titling}
\newcommand \myabstract[2][.8]{%
  \renewcommand\maketitlehookd{%
    %\mbox{}\medskip\par
    \centering
    \begin{minipage}{#1\textwidth}
      {\begin{spacing}{1.0} \small #2\end{spacing}}
    \end{minipage}}}  

\begin{document}
%\author{Joseph M.~Renes}
\author{{\Large Joseph M.~Renes}\\
{ Institute for Theoretical Physics, ETH Z\"urich, 8093 Z\"urich, Switzerland}
%\thanks{Institute for Theoretical Physics, ETH Zurich, 8093 Z\"urich, Switzerland}
}

%\affiliation{Institut f\"ur Angewandte Physik, Technische Universit\"at Darmstadt, Hochschulstr.~4a, 64289 Darmstadt, Germany}
%\affiliation{Institut f\"ur Theoretische Physik, ETH Zurich, Wolfgang-Pauli-Str.\ 27, 8093 Z\"urich, Switzerland}

\title{Uncertainty relations and approximate quantum error correction}
%Approximate Quantum Error Correction via Complementary Observables}

\date{}
\vspace{-1cm}

%\maketitle
\myabstract{
%\begin{abstract}
%The recent development of entropic uncertainty relations shows that the uncertainty principle is not only of continuing foundational interest, but also provides useful tools in quantum information theory. 
%Entropic relations 
The uncertainty principle can be understood as constraining the probability of winning a game in which Alice measures one of two conjugate observables, such as position or momentum, on a system provided by Bob, and he is to guess the outcome. 
Two variants are possible: either Alice tells Bob which observable she measured, or he has to furnish guesses for both cases. 
Here I derive new uncertainty relations for both, formulated directly in terms of Bob's guessing probabilities.
For the former these relate to the entanglement that can be recovered by action on Bob's system alone. 
This gives a condition for approximate quantum error correction in terms of the recoverability of ``amplitude'' and ``phase'' information, implicitly used in the recent construction of efficient  quantum polar codes. 
I also find a new relation on the guessing probabilities for the latter game, which 
has application to wave-particle duality relations.% and in quantum cryptography. 
%\end{abstract}
}

\maketitle

\invisiblesection{Introduction}
Beyond their foundational appeal, uncertainty relations have become an important tool in quantum information theory, particularly entropic formulations (for a review, see~\cite{coles_entropic_2015}). 
One way to frame the recent statements is in terms of a guessing game~\cite{berta_uncertainty_2010}. 
In the game Bob prepares a quantum system and gives to Alice, who then measures one of two conjugate observables, such as position or momentum. 
She then asks Bob to guess the outcome of her measurement, and he wins the game if he can guess correctly. 
There are two versions of the game, depending on whether Alice tells Bob which observable she measured. 
If she does, then Bob need only furnish a guess for one observable, but if not he has no choice but to guess what the outcome would be in either case. 

According to the uncertainty principle, the latter variant must be impossible to win with any reliability, but the former game can be won if Bob supplies Alice with one half of a maximally entangled state. For instance, if the observables are orthogonal angular momentum components of a spin-1/2 particle, then Bob can win the game by supplying a spin-singlet state. Upon learning which observable Alice measured, he performs the same measurement on his spin and reports the opposite result. 
Quantitative bounds on the game formulated using entropy can be found in \cite{christandl_uncertainty_2005,renes_conjectured_2009} for conjugate observables, and subsequent work has generalized the statements to arbitrary observables and entropy measures.   

In this article I derive uncertainty relations for both variants of the game, formulated directly in terms of the guessing probabilities, rather than entropies. 
Guessing probability is a more directly operational quantity than entropy, which  yields more straightforward quantitative constraints on the guessing game and an even simpler interpretation of the resulting uncertainty relations, 
For the latter game, call it version 2, I find a bound which constrains the allowed combinations of guessing probabilities. 
In accordance with intuition from the uncertainty principle, it implies that if Bob can reliably guess one of the observables, then he can do little better than to randomly guess the other. 
The bound builds on a closely related uncertainty relation for min- and max-entropies~\cite{tomamichel_uncertainty_2011}, and turns out to be related to wave-particle duality relations~\cite{coles_unifying_2015}.%, and it is easy to construct a strategy that shows the bound is essentially tight when one guessing probability is nearly unity, unlike the bound in~\cite{renes_conjectured_2009}. 

For the former game, version 1, the new uncertainty relation provides a converse to the sufficiency of using entanglement to win the game. It implies that if Bob can reliably guess either observable, then it must necessarily be possible to recover a high-fidelity entangled state by acting on his system alone. 
This can be viewed as a sufficient condition for approximate quantum error correction, and the decomposition into guessing two observables is reminiscent of the focus on ``amplitude'' and ``phase'' errors in constructions of exact quantum error correcting codes.
Here, however, the focus is shifted away from \emph{errors} and onto classical amplitude and phase \emph{information}, i.e.\ the information about the two observables. 
Thus, whenever a channel can reliably transmit both kinds of classical information, it can reliably transmit entanglement.
The appeal of this condition is that it reduces the quantum task to two simpler, more classical tasks, but does not require the picture of amplitude and phase errors. 

A similar observation was made in~\cite{christandl_uncertainty_2005}, quantifying reliability in terms of entropy.  
Here the link is more direct, however, as the proof proceeds by building an entanglement recovery map from Bob's guessing strategies, which can be viewed as  measurements.
This ensures that properties of the classical decoding measurements can be transferred to the entanglement recovery map. 
Indeed, this approximate error-correcting condition was implicitly used in the construction of quantum polar codes by the author and others~\cite{renes_efficient_2012}, and the constructive link is crucial in showing that the quantum codes are efficiently decodable for suitable channels.

% The paper is organized as follows. 
% First I define the two versions of the game more precisely. 
% Then I consider the first variant of the game and describe the connection to approximate error correction, showing how to construct the entanglement recovery map from the two guessing measurements. 
% I also describe two related error-correcting conditions which are stronger, but not constructive. %One is similar to the decoupling approach, except now in terms of amplitude and phase information. 
% These lead naturally to the second variant of the game. 
% Finally, I conclude with some considerations of channel analogs of the uncertainty relations. 

\section{Uncertainty guessing games}
\noindent Let us restrict our attention to finite-dimensional quantum systems, fix a dimension $d$, and suppose that Alice measures one of the generalized Pauli operators $X=\sum_{z=0}^{d-1}\ket{z+1}\bra{z}$ or $Z=\sum_{z=0}^{d-1}\omega^z \ketbra z$. 
Here $\{\ket{z}\}$ is a fixed basis and $\omega$ is a primitive $d$th root of unity. 
Denote the eigenvectors of $X$ by $\ket{\tilde x}=\tfrac1{\sqrt d}\sum_{z=0}^{d-1}\omega^{xz}\ket{z}$. 
These observables are conjugate in the sense that any eigenstate of one has uniform overlap with any eigenstate of the other, namely $1/\sqrt{d}$. 

Bob is free to prepare any conceivable quantum state of many systems, call it $\psi$, and give a $d$-dimensional subsystem denoted $A$ to Alice for measurement. 
We can describe his procedure for guessing Alice's outcome as performing a measurement on his remaining systems, as was done in the spin-singlet example above. 
Generally, both guessing measurements are POVMs, and let us denote by $\Lambda_z$ ($\Gamma_x$) his POVM for guessing the $Z$ ($X$) outcome. 

%His guess as to Alice's outcome can be described as the result of measuring the remaining systems in some way, most generally with a POVM. Let us denote by $\Lambda_z$ ($\Gamma_x$) his POVM for guessing the $Z$ ($X$) outcome. 

The distinction between the two versions of the game is whether the two POVMs must be performed simultaneously or not, i.e.\ if they commute.  
They need not in version 1, when Alice tells Bob which observable she measured.
But in version 2 Alice demands both guesses, so Bob must perform both measurements. 
In the latter case it is convenient to regard the commutation of the POVMs as arising from the fact that they are measurements on different subsystems, call them $B$ and $E$. 
%For this reason, the second version of the game is sometimes referred to as the tripartite game and the first the bipartite game.

In either case we are chiefly interested in the optimal probability that Bob guesses correctly, which for the $Z$ observable is given by 
\begin{align}
\pg(Z^A|B)_\psi:=\max_{\Lambda_z} \tr\Big[\sum_z (\ketbra z^A\otimes \Lambda_z^B)\psi^{AB}\Big].
\end{align}
Here the optimization is over all valid POVMs, i.e.\ $d$ positive semidefinite operators $\Lambda_z$ with the property that $\sum_{z=0}^{d-1}\Lambda_z =\id$.
The optimal guessing probability for the $X$ observable is entirely analogous.
%for the $Z$ observable and similarly $P(X^A|B)_\psi$ using $\Gamma$ for the $X$ observable. 
Also important is the maximum entanglement fidelity that can be obtained from $\psi$ by acting on Bob's systems alone,
\begin{align}
%\f(A|B)_{\psi}:=\max_{\mathcal E^{A'|B}} \left[\bra{\Phi}\mathcal E^{A'|B}(\psi^{AB})\ket{\Phi}^{AA'}\right]^{\frac 12}.
\f(A|B)_{\psi}:=\max_{\mathcal E^{A'|B}} F\left({\Phi}^{AA'},\mathcal E^{A'|B}(\psi^{AB})\right),
\end{align}
where $F(\rho,\sigma)=\|\sqrt\rho\sqrt\sigma\|_1$ is the fidelity. 
Here the maximum is over quantum channels $\mathcal E^{A'|B}$ taking $B$ to $A'\simeq A$ and $\Phi^{AA'}$ is any maximally-entangled state.

\section{Version 1: Noncommuting guesses}
\noindent Bob can win the bipartite game for any dimension $d$ by preparing an entangled state $\ket{\Phi}^{AB}=\tfrac1{\sqrt d}\sum_{z=0}^{d-1}\ket{z}^A\ket{z}^B$. 
No matter which observable Alice measures on $A$, Bob performs the same measurement on $B$. 
Clearly, for $Z$, Alice and Bob's measurement outcomes always match, and thus $P(Z^A|B)_\Phi=1$. 
The same conclusion holds for $P(X^A|B)_\Phi$, since a simple calculation shows that in this case their outcomes always sum to zero modulo $d$. 

In fact, entanglement in $\psi$ is necessary to win the game. 
This conclusion also holds approximately and is quantatively captured by the relation
\begin{align}
\label{eq:bipartiteresult1}
\acos \f(A|B)_\psi\leq \acos \pg(Z^A|B)_\psi+\acos \pg(X^A|B)_\psi,
\end{align}
where $\mathrm{acos}$ is the inverse of the cosine function. 
%which holds for any bipartite state $\psi^{AB}$. 
%The proof of this statement, which we shall turn to momentarily, proceeds by using the two guessing measurements $\Lambda_z$ and $\Gamma_x$ to directly construct the entanglement recovery operation $\mathcal E^{A'|B}$. 
%Fig.~\ref{fig:bipartite} illustrates the bound.

%The proof of this statement, which we shall turn to momentarily, proceeds by using the two guessing measurements to directly construct the entanglement recovery operation. 

% \begin{figure}[h!]
% \centering
% \input{bipartiteplot.tex}
% \caption{\label{fig:bipartite} Contours of the entanglement fidelity lower bound of \eqref{eq:bipartiteresult1}, characterizing version 1 of the uncertainty game, as a function of the guessing probabilities. 
% %Note that the guessing probabilities must lie outside the quarter-circle for the bound to make any nontrivial statement.
% }
% \end{figure}

Let us now turn to the proof of \eqref{eq:bipartiteresult1}. 
We can actually show a more general statement, somewhat outside the scope of the game, but useful in the context of quantum error correction. 
It turns out that it is not strictly necessary for both guessing probabilites to be close to unity to conclude that entanglement can be recovered from $\psi$. 
We only need to show that $Z^A$ is recoverable from Bob's system and, separately, that $X^A$ is recoverable from Bob's system under the additional assumption that $Z$ is perfectly recoverable.  
More concretely, let $\psi_Z^{AA'B}=U_Z^{AA'|A}\psi^{AB}(U_Z^{AA'|A})^\dagger$, where $U_Z^{AA'|A}=\sum_{z=0}^{d-1}\ketbra z^A\otimes \ket z^{A'}$. 
%Here $A'$ is a system isomorphic to $A$. 
Essentially, $U_Z$ copies the $Z$-value of $A$ to $A'$, and therefore $P(Z^A|A'B)_{\psi_Z}=1$. 
Then, we can show
\begin{theorem}
\label{bipartiteresult}
For any bipartite state $\psi^{AB}$, 
\begin{align}
\label{eq:bipartiteresult}
\mathrm{acos}\, \f(A|B)_\psi\leq \mathrm{acos}\, \pg(Z^A|B)_\psi+\mathrm{acos}\, \pg(X^A|BA')_{\psi_Z}.
\end{align}
\end{theorem}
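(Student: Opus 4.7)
The plan is to build an explicit entanglement-recovery channel from Bob's two guessing strategies and control its performance by the triangle inequality for the purified distance $\acos F$. Taking $\psi^{AB}$ pure (purifying Bob's system if not), write $\ket\psi^{AB}=\sum_z\sqrt{p_z}\ket z^A\ket{\phi_z}^B$ in the $Z$-basis.

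First, let $\{\Lambda_z\}$ be the optimal $B$-POVM for $\pg(Z^A|B)_\psi$ and form its Naimark dilation, the isometry $V_Z=\sum_z\sqrt{\Lambda_z}\otimes\ket z^{A'}\colon B\to B A'$. A direct calculation gives
\[
\langle\psi_Z|(I^A\otimes V_Z)\psi\rangle=\sum_z p_z\bra{\phi_z}\sqrt{\Lambda_z}\ket{\phi_z}\geq \sum_z p_z\bra{\phi_z}\Lambda_z\ket{\phi_z}=\pg(Z^A|B)_\psi,
\]
where the inequality uses the operator inequality $\Lambda_z\leq\sqrt{\Lambda_z}$ (valid whenever $0\leq\Lambda_z\leq I$). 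Both states are pure, so this inner product equals the fidelity $F(V_Z\psi,\psi_Z)$, and thus $\acos F(V_Z\psi,\psi_Z)\leq \acos\pg(Z^A|B)_\psi$. Applying the same construction to the optimal POVM on $BA'$ for guessing $X^A$ from $\psi_Z$ produces an isometry $V_X\colon BA'\to BA'A''$ satisfying $F(V_X\psi_Z,\psi_{Z,X})\geq \pg(X^A|BA')_{\psi_Z}$, where $\psi_{Z,X}$ is obtained from $\psi_Z$ by further coherently copying $X^A$ into $A''$.

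The heart of the argument is a decoupling identity: I would exhibit a fixed unitary $W$ on $A'A''$, independent of $\psi$, for which $(I^{AB}\otimes W)\ket{\psi_{Z,X}}=\ket{\Phi^-}^{AA'}\otimes\ket\psi^{A''B}$ with $\ket{\Phi^-}$ a maximally entangled state on $AA'$. A natural candidate is a quantum Fourier transform on $A''$ followed by a modular CNOT between $A'$ and $A''$ and a SWAP; the identity is then verified by direct calculation using $\ket{\tilde x}=\tfrac{1}{\sqrt d}\sum_z\omega^{xz}\ket z$. This is where the uncertainty principle really enters the proof: coherent recovery of both complementary observables exhausts the quantum information in $A$, leaving it maximally entangled with the ancillas.

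To conclude, define the recovery channel $\mathcal R=\tr_{A''B}\circ W\circ V_X\circ V_Z\colon B\to A'$. Monotonicity of $F$ under partial trace and unitaries, together with isometric invariance under $V_X$, gives $F(\mathcal R(\psi),\Phi^-)\geq F(V_XV_Z\psi,\psi_{Z,X})$, and the triangle inequality for the purified distance bounds
\[
\acos F(V_XV_Z\psi,\psi_{Z,X}) \leq \acos F(V_Z\psi,\psi_Z)+\acos F(V_X\psi_Z,\psi_{Z,X}) \leq \acos\pg(Z^A|B)_\psi+\acos\pg(X^A|BA')_{\psi_Z}.
\]
Since $\f(A|B)_\psi\geq F(\mathcal R(\psi),\Phi^-)$ by definition, the theorem follows. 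The main obstacle will be establishing the decoupling identity; everything else reduces to standard Uhlmann-type arguments and properties of the purified distance.
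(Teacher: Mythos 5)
Your proposal is correct and follows essentially the same route as the paper: coherent (Naimark) implementations of the two optimal guessing measurements, the bound $F\geq P$ via $\sqrt{\Lambda}\geq\Lambda$, the triangle inequality for $\acos F$, and monotonicity to pass to a recovery channel on $B$ alone. The ``decoupling identity'' you flag as the main obstacle is exactly the paper's computation of $W=VU_XU_Z$ and does hold: the single controlled-phase $V^{A'A''}=\sum_x\ketbra{x}^{A''}\otimes(Z^x)^{A'}$ already reduces $U_XU_Z$ to $\tfrac1{\sqrt d}\sum_x\ket{x}^{A''}\otimes\ket{\tilde x}^{A}\otimes\id^{A'|A}$, i.e.\ maximal entanglement on $AA''$ tensored with the input state shifted to $A'B$, and your Fourier-transform-plus-SWAP merely relabels $A'\leftrightarrow A''$.
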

Before proceeding to the proof, first note that \eqref{eq:bipartiteresult} implies \eqref{eq:bipartiteresult1} by monotonicity of $\mathrm{acos}$ and the following.
\begin{lemma}
\label{lem:connectpguess}
For any bipartite state $\psi^{AB}$, 
\begin{align}
\pg(X^A|BA')_{\psi_Z}\geq \pg(X^A|B)_\psi.
\end{align}
\end{lemma}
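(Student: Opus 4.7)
The plan is to give Bob an explicit strategy (a POVM) on $BA'$ whose $X^A$-guessing probability on $\psi_Z^{AA'B}$ matches the optimal $X^A$-guessing probability on $B$ for $\psi^{AB}$; this establishes the inequality by the variational definition of $\pg$.

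The strategy I have in mind is: Bob measures $A'$ in the $X$-basis, obtains some outcome $x'$, applies the optimal POVM $\{\Gamma_y^B\}$ for $\psi^{AB}$ on $B$ to get a tentative guess $y$, and outputs $y-x'$. The intuition is that $U_Z$ coherently correlates $A$ with $A'$ in the $Z$-basis, so measuring $A'$ in the conjugate $X$-basis effectively applies a random but known $Z$-power to the $A$-half of $\psi^{AB}$, which acts as a known cyclic shift on any subsequent $X^A$ outcome. This packages into the POVM
\begin{align*}
\tilde\Gamma_x^{BA'} = \sum_{x'} \Gamma_{x+x'}^B \otimes \ketbra{\tilde{x'}}^{A'},
\end{align*}
and one verifies $\sum_x \tilde\Gamma_x^{BA'} = \id^{BA'}$ immediately from completeness of the $X$-basis on $A'$ together with $\sum_y \Gamma_y^B = \id^B$.

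The computation boils down to the identity
\begin{align*}
\tr_{AA'}\!\big[(\ketbra{\tilde x}^A \otimes \ketbra{\tilde{x'}}^{A'})\psi_Z^{AA'B}\big] = \tfrac{1}{d}\bra{\widetilde{x+x'}}\psi^{AB}\ket{\widetilde{x+x'}},
\end{align*}
which follows by expanding $\psi_Z^{AA'B}=U_Z\psi^{AB}U_Z^\dagger$ in the $Z$-basis and collecting the Fourier phases from the two $X$-basis overlaps on $A$ and $A'$ into the single shift $x+x'$. Inserting this into the success probability of $\tilde\Gamma_x^{BA'}$ and reindexing by $y=x+x'$, the sum over $x'$ produces a factor $d$ that cancels the $1/d$, leaving $\sum_y \tr[\Gamma_y^B \bra{\tilde y}\psi^{AB}\ket{\tilde y}] = \pg(X^A|B)_\psi$ by optimality of $\{\Gamma_y\}$.

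The only real obstacle is keeping the Fourier bookkeeping and the direction of the cyclic shift in Bob's correction step straight, but the POVM normalization and the final equality independently fix the correct sign convention. No deeper structural result is required — the inequality is tight because Bob can essentially simulate the original guessing scenario on $\psi^{AB}$ at the price of an extra random but known shift that he extracts from $A'$.
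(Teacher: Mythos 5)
Your proposal is correct and is essentially the paper's own proof: the POVM $\tilde\Gamma_x^{BA'}=\sum_{x'}\Gamma_{x+x'}^B\otimes\ketbra{\tilde{x'}}^{A'}$ is exactly the paper's $\Xi_x^{A'B}=\sum_{x'}\widetilde\Pi_{x'-x}^{A'}\otimes\Gamma_{x'}^B$ after reindexing, i.e.\ both use the $X$-basis measurement of $A'$ to learn the shift and correct the guess from $B$. The only difference is presentational --- you verify the success probability by direct Fourier expansion of $\psi_Z$, while the paper uses the covariance $(Z^{x})^{A'}\Xi_{x}(Z^{-x})^{A'}=\Xi_0$ together with the rewritten form of $\ket{\psi_Z}$ --- and both yield the same equality with $\pg(X^A|B)_\psi$.
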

This statement is certainly plausible under the intuition that it is easier to guess $X$ when given $A'$ as well as $B$.
However, the process of generating $\psi_Z$ from $\psi$ alters the $X^A$ observable, and so this reasoning does not immediately apply.  
Nevertheless, the statement holds due to conjugacy of $X^A$ and $U_Z$. %of the observable being guessed, $X$, and the procedure for generating $\psi_Z$. 
\begin{proof}
Start by noting that 
%\begin{subequations}
%\begin{align}
$U^{A'A|A}_Z
%&= \tfrac 1{\sqrt d}\sum_{xz}\omega^{-xz} \ket{\tilde x}^{A'} \otimes \Pi_z^A\\
= \tfrac 1{\sqrt d}\sum_{x=0}^{d-1}\ket{\tilde x}^{A'}\otimes  (Z^{-x})^{A}.
$ %$\end{align}
%\end{subequations}
Let $\ket{\psi}^{ABE}$ be a purification of $\psi^{AB}$ and $\ket{\psi_Z}^{AA'BE}=U_Z^{AA'|A}\ket{\psi}^{ABE}$. 
Since $A$ and $A'$ are interchangable in $\ket{\psi_Z}$, it follows that
\begin{align}
\label{eq:psizform}
\ket{\psi_Z}^{AA'BE}
&=\tfrac 1{\sqrt d}\sum_{x=0}^{d-1} \ket{\tilde x}^A\otimes (Z^{-x})^{A'}\ket{\psi}^{A'BE}.
\end{align}
The action of $Z^{-x}$ will be to shift the $X$ value of $A'$ by $-x$. 
But if the $X$ value of $A'$ in $\ket{\psi}^{A'BE}$ is recoverable from $B$, then by comparing the value on $A'$ and $B$, we can accurately determine the value of the shift. 
To this end, let $\Gamma_x^B$ be the optimal measurement in  $\pg(X^A|B)_\psi$. 
Define the new measurement with elements
$
\Xi_x^{A'B}=\sum_{x'=0}^{d-1}\widetilde \Pi_{x'-x}^{A'} \otimes \Gamma_{x'}^B$, 
where $\tilde \Pi_x$ is the projector onto $\ket{\tilde x}$. 
This measurement yields the difference between the outcome of the guessing measurement $\Gamma$ on $B$ and the $X$ measurement on $A'$. 
Notice that $(Z^{x})^{A'}\Xi_{x}^{A'B}(Z^{-x})^{A'}=\Xi_0^{A'B}$. 
Using the form of $\psi_Z$ in \eqref{eq:psizform} to compute $\pg(X^A|BA')_{\psi_Z}$  gives
\begin{subequations}
\begin{align}
\pg(X^A|BA')_{\psi_Z}
&\geq \tr\left[\left(\sum_{x=0}^{d-1}\widetilde \Pi_x^A\otimes \Xi_x^{A'B}\right)\psi_Z^{AA'BE}\right]\\
%&=\tfrac 1d\sum_x \bra{\psi} (Z^{-x})^{A'} \Xi_x^{AB} (Z^x)^{A'} \ket{\psi}^{A'BE} \\
&=\bra{\psi}\Xi_0^{A'B}\ket{\psi}^{A'BE}\\
&=\pg(X^A|B)_\psi,
\end{align}
\end{subequations}
establishing the claim. 
%Now observe that
%\begin{align}
%\pg(X^A|B)_\psi=\bra{\psi}\Xi_0^{AB}\ket{\psi}^{ABE}.
%\end{align}
%
\end{proof}

%Now let us return to the 
\begin{proof}[Proof of Theorem~\ref{bipartiteresult}]
First consider the properties of the coherent implementations of Alice's measurements. 
The coherent $Z$ measurement is given by $U_Z$, and the analogous $U_X$ is simply $U_X=\sum_{x=0}^{d-1}\ketbra{\tilde x}^A\otimes \ket x^{A''}$. 
Performing one after the other yields 
%\begin{subequations}
\begin{align}
%U_X^{A''A|A}\ket{\psi_Z}^{AA'BE}&=
U_X^{A''A|A}U_Z^{A'A|A}
%&=\sum_{xz} \ket{x}^{A''}\otimes \ket{z}^{A'}\otimes \widetilde \Pi_x^A\Pi_z^A\\
&=\tfrac 1{\sqrt{d}}\sum_{x,z=0}^{d-1} \omega^{-xz} \ket{x}^{A''}\otimes \ket{z}^{A'}\otimes \ket{\tilde x}\bra{z}^A.
\end{align}
%\end{subequations}
The phase $\omega^{-xz}$ can be removed by a controlled-phase operation $V^{A'A''}=\sum_{x=0}^{d-1}\ketbra x^{A''}\otimes (Z^x)^{A'}$. %$=\sum_{xz}\omega^{xz}\Pi_z^{A'}\otimes \Pi_x^{A''}$ (a controlled-phase operation), we then have
Defining $W^{A'A''|A}:=V^{A'A''}U_X^{A''A|A}U_Z^{A'A|A}$, we find 
%\begin{subequations}
\begin{align}
W^{A'A''|A}
%&:=V^{A'A''}U_X^{A''A|A}U_Z^{A'A|A} \\
&= \tfrac 1{\sqrt{d}}\sum_{x=0}^{d-1} \ket{x}^{A''}\otimes  \ket{\tilde x}^A \otimes \id^{A'|A}.
\end{align}
%\end{subequations}
This operator transfers $A$ to $A'$ and then creates a maximally-entangled state in $AA''$. 
%(Note that a similar $V$ acting on $AA'$ would produce the same result.)

Thus, if Bob can simulate the action of $U_X$ and $U_Z$ by coherent measurements on his system, he should be able to create a high-fidelity entangled state in $AA''$. 
Suppose that the optimal measurements for guessing $Z$ and $X$ are $\Lambda_z^B$ and $\Gamma_x^{A'B}$, respectively. 
Define the coherent implementations of his two measurements,
$V_Z^{A'B|B}:=\sum_{z} \ket{z}^{A'}\otimes \sqrt{\Lambda_z^B}$ and $V_X^{A''A'B|A'B}:=\sum_x \ket{x}^{A''}\otimes \sqrt{\Gamma_x^{A'B}}$,
% \begin{subequations}
% \begin{align}
% V_Z^{A'B|B}&:=\sum_z \ket{z}^{A'}\otimes \sqrt{\Lambda_z^B},\\
% V_X^{A''A'B|A'B}&:=\sum_x \ket{x}^{A''}\otimes \sqrt{\Gamma_x^{A'B}},
% \end{align}
% \end{subequations}
and consider the fidelity between $VV_XV_Z\ket{\psi}$ and $W\ket{\psi}$.
Since $VV_XV_Z$ is an operation solely on Bob's systems, we have
\begin{align}
\label{eq:recoverableent}
F(A|B)_\psi\geq F(W\ket{\psi},VV_XV_Z\ket{\psi}).
\end{align}
The operation of $VV_XV_Z$ is shown as a quantum circuit in Fig.~\ref{fig:entdec}. 
Using the triangle inequality and unitary invariance of the fidelity~\cite[\S9.2.2]{nielsen_quantum_2000}, we have
\begin{align}
\label{eq:fidelitytriangle}
&\acos F(W\ket{\psi},VV_XV_Z\ket{\psi})\\
&\,\,\leq \acos F(U_X\!\ket{\psi_Z},V_X\!\ket{\psi_Z}) + \acos F(\ket{\psi_Z},V_Z\!\ket{\psi}).\nonumber
\end{align}
We can bound the two terms, starting with the second. 
Notice that, since $\sqrt{\Lambda}\geq \Lambda$ for $0\leq \Lambda\leq \id$, 
\begin{align}
\label{eq:sqrtbound}
(U^{A'A|A}_Z)^\dagger V_Z^{A'B|B}
%=\sum_z \Pi_z^A\otimes \sqrt{\Lambda_z^B}
\geq \sum_{z=0}^{d-1} \ketbra z^A\otimes {\Lambda_z^B}.
\end{align}
Then $F(\ket{\psi_Z},V_Z\ket{\psi})\geq P(Z^A|B)_\psi$ immediately follows.
For $U_X^\dagger V_X$ the argument is entirely analogous and yields $F(U_X\ket{\psi_Z},V_X\ket{\psi_Z})\geq P(X^A|BA')_{\psi_Z}$. 
% \begin{subequations}
% \begin{align}
% \bra{\psi_Z}V_Z^{A'B|B}\ket{\psi}^{ABE}
% &\geq \bra{\psi}\left(\sum_z \Pi_z^A\otimes {\Lambda_z^B}\right)\ket{\psi}^{ABE}\\
% &=\pg(Z^A|B)_\psi.\label{eq:zbound}
% \end{align}
% \end{subequations}
% The analogous statement to \eqref{eq:sqrtbound} for $U_X$ and $V_X$ gives
% \begin{align}
% \bra{\psi_Z}(U_X^{A''A|A})^\dagger V_X^{A''A'B|A'B}\ket{\psi_Z}\geq \pg(X^A|BA')_{\psi_Z}.
% \label{eq:xbound}
% \end{align}
% Using \eqref{eq:recoverableent}, \eqref{eq:zbound}, and \eqref{eq:xbound} in \eqref{eq:fidelitytriangle} completes the proof.
\end{proof}

\section{Approximate quantum error correction}
\noindent Both \eqref{eq:bipartiteresult1} and \eqref{eq:bipartiteresult} can be regarded as conditions for approximate quantum error correction. 
Suppose we are interested in transmitting entanglement through a given quantum channel by inputting one half of some fixed bipartite state. 
This results in an output state $\psi^{AB}$, and measuring either $X$ or $Z$ of system $A$ results in an output that corresponds to input of an $X$ or $Z$ eigenstate to the channel. 
With either set of $X$ or $Z$ inputs we could hope to send \emph{classical} information through the channel, and \eqref{eq:bipartiteresult1} or \eqref{eq:bipartiteresult} imply that if both of these classical tasks are reliable on average, then it is also possible to transmit \emph{quantum} information. 
Here the average is taken over the choice of $X$ or $Z$ inputs, the probabilities of which are determined by the associated measurement results.

\begin{figure}[th!]
\centering
\includegraphics{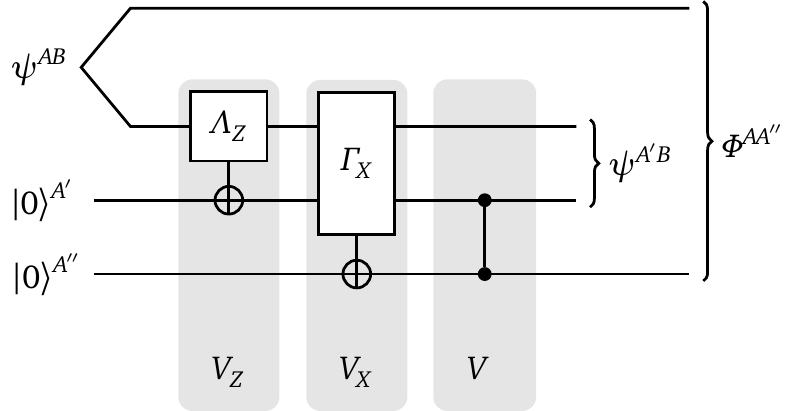}
\caption{\label{fig:entdec} The quantum circuit recovering entanglement from a bipartite state $\psi^{AB}$, when measurement of $Z$ or $X$ on $A$ can be reliably predicted by measurement of $B$ or $BA'$, respectively. 
The associated measurements $\Lambda_Z$ and $\Gamma_X$ are performed coherently in sequence, the latter taking results of the former into account, followed by a controlled-phase gate applied to the ancilla systems. 
The procedure also leaves the input state in systems $A'$ and $B$.
%It proceeds in three steps. %, indicated in gray. 
% Bob coherently performs the measurements $\Lambda_Z$ and $\Gamma_X$ sequentially, and then applies a controlled-phase gate to the ancilla systems storing the results. $\Gamma_X$ may depend on the result of the first stage.
% This procedure also leaves Bob holding the original input state $\psi$ in systems $A'$ and $B$.
}
%In the second stage, $\Gamma_X$ may depend on the result of the first stage
%The 

%First, Bob coherently performs the measurement $\Lambda_Z$ allowing him to predict $Z$, storing the result in auxiliary system $A'$ (unitary $V_Z$). 
%Next, he coherently performs the measurement $\Gamma_X$ allowing him to predict $X$, storing the result in auxiliary system $A''$ (unitary $V_Z$); knowledge of $Z$, now stored in $A'$, may be necessary for this task. 
%Finally, to recover a maximally entangled state in system $A''$, he applies a controlled phase gate (unitary $V$). 
%\vspace{-14pt}
\end{figure}

As with the focus on amplitude and phase errors in exact quantum error correction~\cite[Ch.~10]{nielsen_quantum_2000}, \eqref{eq:bipartiteresult1} and \eqref{eq:bipartiteresult} allow us to break the problem of quantum transmission down into two classical pieces. 
This gives additional structure to the problem of designing encoding and decoding mechanisms and allows a large flexibility in adapting each to the particular channel at hand. 
This can help guide our search for reliable codes and encoders. 
And since we can lift efficient decoder constructions for classical transmission to quantum, this gives us some structure with which to construct efficient and practical decoders. 
As mentioned above, this link is crucial in constructing efficient quantum polar codes~\cite{renes_efficient_2012}.

Moreover, shifting the focus away from errors to information allows yet more flexibility in adapting an error-correction scheme to a particular channel. 
This can be illustrated in the original example of an approximate error-correcting code by Leung \emph{et al.}~\cite{leung_approximate_1997}, where just four physical qubits suffice to protect one encoded logical qubit from the action of the amplitude damping channel to first order in the damping probability. 
(Exact correction requires five qubits.)
Even though amplitude damping is not a combination of amplitude and phase errors, we can understand the operation of the approximate code as enabling reliable transmission of amplitude and phase information to first order.
Recently we have applied this approach to find structured decoders for approximate codes designed for the amplitude damping channel based on nonlinear classical codes~\cite{alvaro}.  
%Recently, the author has applied the entanglement recovery approach to find structured decoders for high-rate approximate codes designed for the amplitude damping channel based on non-linear classical codes.  

%By focussing on information as opposed to errors, we can better tailor the
%We can then focus on solving these pieces. This guides our search for when a particular quantum error-correcting code would be reliable, and gives us some structure with which to construct efficient and practical decoders, since we can lift efficient constructions for classical transmission to quantum. These two aspects play off of each other. 

%interpretation as approximate qec; usefulness in constructing protocols. Other conditions and their means of constructing protocols. Emphasize efficiency / structure, not Shannon theory. 

By altering the proof of Theorem~\ref{bipartiteresult}, we can obtain two stronger but nonconstructive conditions for approximate entanglement recovery. 
These both involve the fidelity
\begin{align}
\q(Z^A|B)_\psi:=F(\psi^{AB}_Z,\mix^Z\otimes \psi^B),
\end{align}
which quantifies how close the $Z$ outcome of $A$ is to being uniformly distributed and independent of the conditional state in $B$. 
\begin{theorem}
\label{thm:altbipartite}
For $\psi^{ABE}$ a pure state,
\begin{align}
\acos \f(A|B)_\psi&\leq \acos \pg(Z^A|B)_\psi+\acos \q(Z^A|E)_{\psi},\label{eq:aqecsecretkey}\\
\acos \f(A|B)_\psi&\leq \acos \q(X^A|A'E)_{\psi_Z}\!{+}\acos \q(Z^A|E)_{\psi}.\label{eq:aqectrump}
\end{align}
\end{theorem}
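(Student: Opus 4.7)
The plan is to modify the proof of Theorem~\ref{bipartiteresult} by replacing the coherent-measurement isometries $V_Z,V_X$ with the optimal isometries provided by Uhlmann's theorem. I keep the same fidelity-triangle decomposition, so the problem reduces to bounding $\acos F(U_X\ket{\psi_Z},V_X\ket{\psi_Z}) + \acos F(\ket{\psi_Z},V_Z\ket{\psi})$, but now I maximize each fidelity over all isometries on Bob's systems. By Uhlmann's theorem, each such maximum equals the fidelity between the reductions of the two pure states involved to the ``unchanged'' part $AE$.

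For the $V_X$ term, the $AE$-marginal of $U_X\ket{\psi_Z}$ is obtained from $\psi_Z^{AE}$ by $X$-dephasing on $A$. Since $\psi_Z^{AE}$ is already diagonal in $Z$, this $X$-dephasing produces $\pi^A \otimes \psi^E$. The $AE$-marginal of $V_X\ket{\psi_Z}$ is simply $\psi_Z^{AE}$, so Uhlmann gives $\max_{V_X} F(U_X\ket{\psi_Z},V_X\ket{\psi_Z}) = F(\pi^A\otimes \psi^E,\psi_Z^{AE}) = \q(Z^A|E)_\psi$. Combining this with the original bound $F(\ket{\psi_Z},V_Z\ket{\psi}) \geq \pg(Z^A|B)_\psi$ from the proof of Theorem~\ref{bipartiteresult} yields \eqref{eq:aqecsecretkey}.

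For \eqref{eq:aqectrump} I apply the same Uhlmann step to the $V_Z$ term as well: its $AE$-marginals are $\psi_Z^{AE}$ and $\psi^{AE}$, giving $\max_{V_Z} F(\ket{\psi_Z},V_Z\ket{\psi}) = F(\psi_Z^{AE},\psi^{AE})$. The remaining task is then the key identity $F(\psi_Z^{AE},\psi^{AE}) = \q(X^A|A'E)_{\psi_Z}$. Using \eqref{eq:psizform} one can compute the $x$-blocks of the $X$-measured $\psi_Z^{AA'E}$ as $\xi_x^{A'E} = \tfrac{1}{d}(Z^{-x})^{A'}\psi^{A'E}(Z^x)^{A'}$ (after identifying $A'\simeq A$); since $\psi_Z^{A'E}$ is $Z$-diagonal and therefore $Z$-invariant, each $F(\xi_x,\psi_Z^{A'E})$ is independent of $x$ and equals $\tfrac{1}{\sqrt d}F(\psi^{AE},\psi_Z^{AE})$. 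Summing over the $d$ blocks together with the $\pi^X$ normalization factor collapses everything back to $F(\psi^{AE},\psi_Z^{AE})$, establishing the identity.

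The main obstacle is this last identity. Heuristically it is a form of complementarity --- recovering $X$ from $A'B$ in $\psi_Z$ is dual to decoupling $Z$ from $E$ in $\psi$ --- but the direct proof requires the explicit Weyl-shift manipulation of the blocks $\xi_x$ sketched above. Everything else is a routine Uhlmann-augmented adaptation of the proof of Theorem~\ref{bipartiteresult}, with the fidelity triangle inequality playing exactly the same role.
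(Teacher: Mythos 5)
Your proposal is correct and follows essentially the same route as the paper: keep the fidelity triangle decomposition from Theorem~\ref{bipartiteresult}, replace the coherent-measurement isometries by Uhlmann optimizers, and identify the resulting marginal fidelities with $\q(Z^A|E)_\psi$ and $F(\psi^{AE},\psi_Z^{AE})=\q(X^A|A'E)_{\psi_Z}$. Your block-by-block Weyl-shift computation of that last identity is equivalent to the paper's controlled-phase argument, which maps $\psi_{Z,X}^{AA'E}$ to $\pi^A\otimes\psi^{A'E}$ while fixing $\pi^A\otimes\psi_Z^{A'E}$.
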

The first bound says that if the pure state $\psi^{ABE}$ can be used to create a secret key between Alice and Bob, a uniformly-distributed classical random variable independent of $E$, then the same state can be transformed into a maximally entangled state. A similar relation was used by Devetak in the achievability part of the quantum noisy channel coding theorem~\cite{devetak_private_2005}.
The second states that entanglement is recoverable from $B$ when system $E$ cannot predict $Z$ or $X$, the latter case even aided by knowledge of $Z$. 
This is broadly similar to the ``decoupling'' statement of Schumacher and Westmoreland~\cite{schumacher_approximate_2002}, but formulated as decoupling of $X$ and $Z$ information, not of the quantum state itself. 
The proof below  makes clear that $\q(Z^A|E)_\psi\geq \pg(X^A|A'B)_{\psi_Z}$ and likewise $\q(X^A|A'E)_{\psi_Z}\geq \pg(Z^A|B)_\psi$. 
Thus, the latter condition is the strongest as it implies the former, and both imply \eqref{eq:bipartiteresult}. 
\begin{proof}
The proof proceeds by replacing each of the two terms in the bound of \eqref{eq:fidelitytriangle} by fidelities. %the $\q$ quantities in \eqref{eq:aqecsecretkey} and \eqref{eq:aqectrump}. 

To establish \eqref{eq:aqecsecretkey} we use $\q(Z^A|E)_{\psi}$ to construct an appropriate $V_X$ in $F(U_X\ket{\psi_Z},V_X\ket{\psi_Z})$.
Start with $F(\psi^{AE}_Z,\pi^A\otimes \psi^E)$ and observe that $W\ket{\psi}=VU_X\ket{\psi_Z}$ is a purification of $\pi^A\otimes\psi^E$, as is $U_X\ket{\psi_Z}$. 
Of course, $\ket{\psi_Z}$ is a purification of $\psi^{AE}_Z$, and so  
%Therefore, %Meanwhile, $W\ket{\psi}=VU_X\ket{\psi_Z}$ is a purification of $\pi^A\otimes\psi^E$, and since $V$ acts on $A'A''$,  $U_X\ket{\psi_Z}$ is also a valid purification. 
%hus, 
\begin{align}
\label{eq:qze}
\q(Z^A|E)_{\psi} =\max_V \bra{\psi_Z}(U_X^{A''A|A})^\dagger V^{A'A''B|A'B}\ket{\psi_Z},
\end{align}
where $V$ is an isometry from $A'B$ to $A''A'B$. 
Calling the optimizer $V_X$ and applying it to \eqref{eq:fidelitytriangle} gives \eqref{eq:aqecsecretkey}. 

For \eqref{eq:aqectrump} we use $\q(X^A|A'E)_{\psi_Z}$ to construct an appropriate $V_Z$ in $F(\ket{\psi_Z},V_Z\ket{\psi})$. 
For $\q(X^A|A'E)_{\psi_Z}$ the relevant state is the $X$-measured version of $\psi_Z$:
\begin{subequations}
\begin{align}
\psi_{Z,X}^{AA'BE}
&:=\tr_{A''}[U_X^{AA''|A}\psi_Z^{AA'BE}(U_X^{AA''|A})^\dagger]\\
&\phantom{:}=\tfrac1d\sum_x \ketbra{\tilde x}^A\otimes (Z^{-x})^{A'}\psi^{A'BE} (Z^{x})^{A'}.\label{eq:twicemeasured}
\end{align}
\end{subequations}
Tracing out $A$ dephases $A'$ in the $Z$ basis, meaning $\psi_{Z,X}^{A'BE}=\psi_Z^{A'BE}$. 
Note that a controlled-phase operation from $A$ to $A'$ transforms $\psi_{Z,X}^{AA'BE}$ into $\mix^A\otimes \psi^{A'BE}$. 
Hence, %Using these two statements we find
%Since $A'$ is now invariant under the action of $Z^x$ and the fidelity is unitarily-invariant, 
\begin{subequations}
\begin{align}
 \q(X^A|A'E)_{\psi_Z}
 %&=F(\psi_{Z,X}^{AA'E},\mix^A\otimes\psi_{Z}^{A'E} )\\
 &=F(\psi^{A'E},\psi_Z^{A'E})\\
 &=\max_V \bra{\psi_Z}V^{A'B|B}\ket{\psi}^{ABE},
 \end{align}
 \end{subequations}
 where in the final step we have interchanged the $A$ and $A'$ labels. 
 Calling the optimizer $V_Z$ and applying it to \eqref{eq:fidelitytriangle} with $V_X$ from the previous argument gives \eqref{eq:aqectrump}. 
\end{proof}

% \begin{corollary}
% \label{guessingfidelitybounds}
% \begin{subequations}
% \begin{align}
% \q(Z^A|E)_\psi^2 &\leq \pg(X^A|A'B)_{\psi_Z}\leq \q(Z^A|E)_\psi\\
% \q(X^A|A'E)_{\psi_Z}^2 &\leq \pg(Z^A|B)_\psi\leq \q(X^A|A'E)_{\psi_Z}
% \end{align}
% \end{subequations}
% \end{corollary}

\section{Version 2: Commuting guesses}
\noindent The uncertainty principle implies that Bob cannot always win version 2 of the uncertainty game, for to do so would require preparing a state that is simultaneously an eigenstate of $X$ and $Z$. 
In fact, we can obtain a quantitative approximate statement in this direction from the above results. 
Using Bob's optimal measurement $V_X$ in \eqref{eq:qze} we obtain $\q(Z^A|E)_{\psi}\geq P(X^A|BA')_{\psi_Z}$, and by Lemma~\ref{lem:connectpguess} this implies $\q(Z^A|E)_{\psi}\geq P(X^A|B)_\psi$. 
Thus, the larger Bob's probability of guessing $X$ using $B$, the more the distribution of $Z$ looks uniform and independent of the system $E$. 
A tighter relation comes from an entropic uncertainty relation for min- and max-entropies~\cite{tomamichel_uncertainty_2011}, which in the present notation reads  
\begin{align}
\label{eq:guessingfidelity}
\max_\sigma F(\psi^{AE}_Z,\mix^A\otimes \sigma^E)^2\geq \pg(X^A|B)_\psi.
\end{align}

\begin{figure}[th!]
\centering
\includegraphics{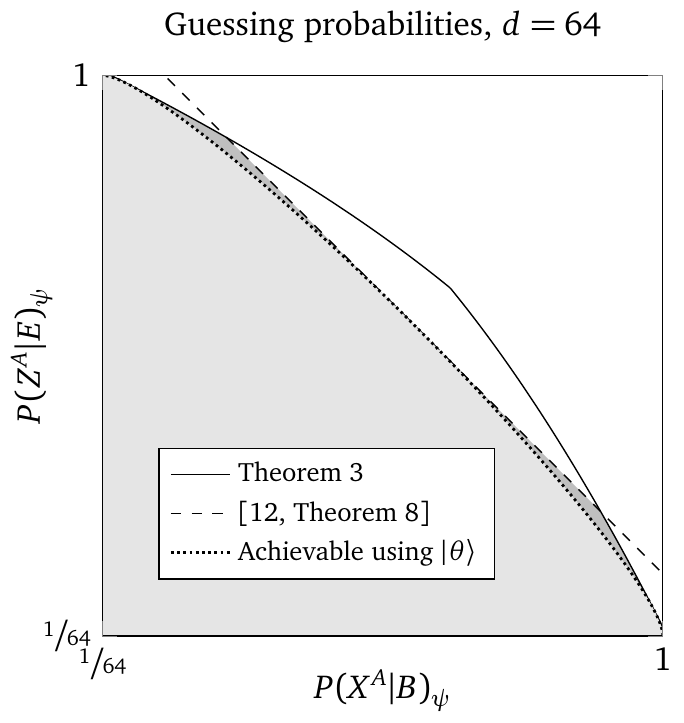}
\caption{\label{fig:tripartite} Feasible and achievable guessing probabilities for version 2 of the uncertainty game, for the case $d=64$.}
\end{figure}

But what are the possible combinations of guessing probabilities? 
The fidelity quantity $Q$ does have an operational meaning, but is not so immediately related to the guessing game. 
It turns out that \eqref{eq:guessingfidelity} can be transformed into a constraint on the set of possible guessing probabilities. 
In particular, we have the following bounds.

\begin{theorem}
\label{prop:tripartite}
For any tripartite state $\psi^{ABE}$, 
\begin{align}
\pg(Z^A|E)_\psi+\left( \pg(X^A|B)_\psi-\tfrac 1d\right)^2&\leq 1,\label{eq:tripartite1}\\
\pg(X^A|B)_\psi+\left( \pg(Z^A|E)_\psi-\tfrac 1d\right)^2&\leq 1.\label{eq:tripartite2}
\end{align}
\end{theorem}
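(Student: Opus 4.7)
The plan is to combine the min/max-entropy uncertainty relation \eqref{eq:guessingfidelity} with the Fuchs--van de Graaf inequality $\f(\rho,\tau)^2\le 1-\bigl(\tfrac12\|\rho-\tau\|_1\bigr)^2$.

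I would target \eqref{eq:tripartite2} first; \eqref{eq:tripartite1} then follows by exchanging the roles of $X\leftrightarrow Z$ and $B\leftrightarrow E$ in the same argument, using the symmetric form of \eqref{eq:guessingfidelity} obtained by swapping these labels. The uncertainty relation already gives $\pg(X^A|B)_\psi \le g^2$, where $g^2:=\max_\sigma \f(\psi_Z^{AE},\mix^A\otimes\sigma^E)^2$, so it suffices to establish the complementary upper bound $g^2 \le 1-\bigl(\pg(Z^A|E)_\psi-\tfrac1d\bigr)^2$.

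To do so, fix any $\sigma^E$ and let $\{\Lambda_z\}$ be a POVM on $E$ attaining $\pg(Z^A|E)_\psi$. Form the operator $M=\sum_{z}\ketbra{z}^A\otimes\Lambda_z^{E}$ on $AE$; because $\sum_z\Lambda_z=\id^E$, one checks directly that $0\le M\le\id^{AE}$. A further direct calculation gives $\tr[M\,\psi_Z^{AE}]=\pg(Z^A|E)_\psi$, while $\tr[M(\mix^A\otimes\sigma^E)]=1/d$, the latter being independent of the choice of $\sigma^E$. Therefore
\begin{align*}
\tfrac12\|\psi_Z^{AE}-\mix^A\otimes\sigma^E\|_1\ge \pg(Z^A|E)_\psi-\tfrac{1}{d},
\end{align*}
and substituting into Fuchs--van de Graaf yields $\f(\psi_Z^{AE},\mix^A\otimes\sigma^E)^2\le 1-(\pg(Z^A|E)_\psi-\tfrac1d)^2$. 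Maximizing the left side over $\sigma^E$ gives the desired bound on $g^2$, and combining with $\pg(X^A|B)_\psi\le g^2$ delivers \eqref{eq:tripartite2}.

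No step here is really an obstacle; the substantive choice is the operator $M$, which repackages the optimal $Z$-guessing strategy as a single dichotomic effect on $AE$ that separates $\psi_Z^{AE}$ from \emph{every} product state $\mix^A\otimes\sigma^E$ by at least $\pg(Z^A|E)_\psi-1/d$. The clean constant $1/d$ in the theorem reflects the fact that $\tr[M(\mix^A\otimes\sigma^E)]=1/d$ is uniform in $\sigma^E$, so a single effect simultaneously bounds the trace distance to every such product state and hence the maximum fidelity $g$.
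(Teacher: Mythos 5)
Your proposal is correct and is essentially identical to the paper's own argument: the paper also builds the effect $\Upsilon^{AE}=\sum_z\ketbra{z}^A\otimes\Lambda_z^E$ from the optimal $Z$-guessing POVM, notes $\tr[\Upsilon(\psi_Z^{AE}-\mix^A\otimes\sigma^E)]=\pg(Z^A|E)_\psi-\tfrac1d$, applies $\delta(\rho,\sigma)^2+F(\rho,\sigma)^2\le1$, and then invokes \eqref{eq:guessingfidelity}, obtaining \eqref{eq:tripartite1} by the same symmetry swap. No differences worth noting.
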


%Now let us turn to the proof of Theorem~\ref{prop:tripartite}, which 
The proof is based on a bound relating the trace distance and fidelity.
Suppose $\{\Lambda_z^E\}$ is the optimal measurement in $\pg(Z^A|E)_\psi$ and define $\Upsilon^{AE}=\sum_z \ketbra z^A\otimes \Lambda_z^E$.
Then $\tr[\Upsilon^{AE}(\psi^{AE}_Z-\mix^A\otimes \sigma^E)  ]=\pg(Z^A|E)_\psi-\tfrac 1d$. 
Maximizing this expression over all possible POVM elements gives the trace distance: $\delta(\rho,\sigma):=\max_\Upsilon\tr[\Upsilon(\rho-\sigma)]$, and so we can appeal to the bound $\delta(\rho,\sigma)^2+F(\rho,\sigma)^2\leq 1$~\cite[\S9.2.3]{nielsen_quantum_2000} to infer 
\begin{align}
\label{eq:Hminlowerbound}
\left(\pg(Z^A|E)_\psi-\tfrac 1d\right)^2+F(\psi_Z^{AE},\pi^A\otimes \sigma^E)^2\leq 1.
\end{align}
Using \eqref{eq:guessingfidelity} in \eqref{eq:Hminlowerbound}, choosing $\sigma^E$ to be the fidelity optimizer, immediately gives \eqref{eq:tripartite2}. 
The other inequality follows by interchanging observables and $B$ and $E$ systems.

Theorem~\ref{prop:tripartite} tells us more precisely how the probability of guessing the outcome of one observable tends to its minimum as the probability of guessing the other goes to unity, as illustrated in Fig.~\ref{fig:tripartite} for $d=64$.
The bounds are nearly tight when one guessing probability is large: Bob can simply interpolate between the $X$ and $Z$ bases by preparing a state in the family $\ket{\theta}=\tfrac1{\sqrt{\mathcal N}}\left(\cos\theta\ket{0}+\sin\theta\ket{\tilde 0}\right)$, for $\theta\in [0,\tfrac\pi2]$ and $\mathcal N$ the appropriate normalization, and always guess the outcomes will be $Z=0$ and $X=0$. 
%These interpolate between $Z$ and $X$ eigenstates, and show that it is possible to achieve guessing probabilities inside the quarter-circle of radius $1/2$ centered at $(1/2,1/2)$.

The bounds are loose for $P(Z^E|B)\approx P(X^A|B)$, but here we can appeal to bounds for a related uncertainty game. 
Instead of one party guessing Alice's outcome, \cite{tomamichel_monogamyofentanglement_2013} supposes there are two, and each is told which observable was measured. 
This is more information than Bob receives in version 2 of the present game, so the guessing probabilities here must be smaller. 
Nevertheless, by having two parties, there exist commuting guessing measurements for the two observables, meaning constraints derived in \cite{tomamichel_monogamyofentanglement_2013} also apply here. 
Figure~\ref{fig:tripartite} shows that their Theorem 8 together with Theorem~\ref{prop:tripartite} give a nearly-tight characterization of allowed guessing probabilities. %ules out essentially all of the difference between the region known to be achievable by the above argument and the bounds from . 

%complements Theorem~\ref{prop:tripartite}, but neither characterizes the region of intermediate guessing probabilities very well. 

For qubits, the situation is even better, as we can appeal to a bound from~\cite{coles_unifying_2015} which links the min-max uncertainty relation \eqref{eq:guessingfidelity} to wave-particle duality relations. 
For $d=2$ using their Eq.~(6) instead of \eqref{eq:Hminlowerbound} in \eqref{eq:guessingfidelity} gives $(2\pg(Z^A|E)_\psi-1)^2+(2\pg(X^A|B)_\psi-1)^2\leq 1$, which precisely matches the achievable strategy given above. 
Moreover, since \eqref{eq:Hminlowerbound} is tighter than Eq.~(6) for $d>2$, Theorem~\ref{prop:tripartite} leads to a tightened version of the wave-particle duality relation for symmetric interferometers in \cite[Theorem 1]{coles_unifying_2015} by using the definitions therein of the particle distinguishability $\mathcal D=(d\pg(Z^A|E)_\psi-1)/(d-1)$ and the visibility $\mathcal V=\max_{X}(d\pg(X^A|B)_\psi-1)/(d-1)$, where the maximization is over all  observables conjugate to $Z$.

\section{Conclusions and open questions}
\noindent I have given uncertainty relations in the form of bounds on the guessing probabilities in the two variants of the uncertainty game. 
The uncertainty relation for the first version yields a sufficient condition for approximate quantum error correction, and a simple modification of the proof yields two stronger but nonconstructive sufficient conditions. 
In combination with \cite[Theorem 8]{tomamichel_monogamyofentanglement_2013}, the bounds on the second version were found to be essentially tight, but tightness of the first is an open question. 
Furthermore, following the approach of \cite{coles_unifying_2015}, the relation for the second version yields a new wave-particle duality relation for multi-path interferometers. 

Finally, it is interesting to consider if channel versions of the uncertainty relations could hold. 
For instance, we may ask if reliable transmission of classical $X$ and $Z$ information even in the worst-case implies that the given channel is close to the identity channel. 
However, a counterexample constructed in~\cite{renes_heisenberg_2016} shows that this is not the case. 
The particular channel is such that Bob's probabilities of guessing $X$ and $Z$ in the worst case are $(d+\sqrt 2-2)^2/d^2$ and exactly 1, respectively, and yet two particular channel inputs lead to completely distinct outputs to the channel environment. % channel output to the environment is not essentially independent of the input. 
However, the information-disturbance tradeoff of \cite{kretschmann_information-disturbance_2008} requires the environment output of channels close to identity to be essentially independent of the input. 
Hence, no channel analog of Theorem~\ref{bipartiteresult} or of \eqref{eq:guessingfidelity} can hold, though neither statement of Theorem~\ref{thm:altbipartite} is apparently ruled out.
Moreover, the worst-case probability of guessing $Z$ from the output to the environment is $1/(d-1)$, and therefore \eqref{eq:tripartite2} (the tighter of the two bounds in this case) does hold. 
%This example only hints that the relation could hold for noiseless channel simulation generally. 
It would be interesting to determine if worst-case versions of Theorems~\ref{thm:altbipartite} and \ref{prop:tripartite} hold for channels generally; doubly so for the latter since it is derived from \eqref{eq:guessingfidelity}, which we just observed does not hold in this setting. 

\vspace{2mm}
{\bf Acknowledgements:} 
I am grateful to Patrick Coles for pointing out the bound from~\cite{coles_unifying_2015}. 
This work was supported by the Swiss National Science Foundation (through the NCCR `Quantum Science and Technology' and grant No. 200020-135048) and the European Research Council (grant No. 258932).

\printbibliography[heading=bibintoc,title=References]

\end{document}